\newtheorem{thm}{Theorem}[section]
\newtheorem{df}[thm]{Definition}
\newtheorem{prob}[thm]{Problem}
\newtheorem{rem}[thm]{Remark}
\newtheorem{prop}[thm]{Proposition}
\title{\Large Secure Control under Partial Observability with Temporal Logic Constraints$^*$}
\author{Bhaskar Ramasubramanian$^{1}$, Andrew Clark$^{2}$, Linda Bushnell$^{1}$, and Radha Poovendran$^{1}$% <-this % stops a space
\thanks{$^*$This work was supported by the U.S. Army Research Office, the National Science Foundation, and the Office of Naval Research via Grants W911NF-16-1-0485, CNS-1656981, and N00014-17-S-B001 respectively.}
%\thanks{$^{**} \Pi()$ denotes a \emph{permutation}!}
\thanks{$^{1}$Network Security Lab, Department of Electrical and Computer Engineering, 
University of Washington, Seattle, WA 98195, USA. \newline
        {\tt\small \{bhaskarr, lb2, rp3\}@uw.edu}}%
\thanks{$^{2}$Department of Electrical and Computer Engineering, 
Worcester Polytechnic Institute, Worcester, MA 01609, USA. % \newline
        {\tt\small aclark@wpi.edu}}%
}
\date{}
\begin{document}
\maketitle

\begin{abstract}
This paper studies the synthesis of control policies for an agent that has to satisfy a temporal logic specification in a partially observable environment, in the presence of an adversary. 
The interaction of the agent (defender) with the adversary is modeled as a partially observable stochastic game. 
The search for policies is limited to over the space of finite state controllers, which leads to a tractable approach to determine policies. 
The goal is to generate a defender policy to maximize satisfaction of a given temporal logic specification under any adversary policy. 
We relate the satisfaction of the specification in terms of reaching (a subset of) recurrent states of a Markov chain. 
We then present a procedure to determine a set of defender and adversary finite state controllers of given sizes that will satisfy the temporal logic specification. 
We illustrate our approach with an example. 
\end{abstract}

\section{Introduction}\label{Introduction}

Cyber-physical systems (CPSs) are complex entities in which the working of a physical system is governed by interactions with computing devices and algorithms. 
These systems are ubiquitous \cite{baheti2011cyber}, and vary in scale from power systems to medical devices and robots. 
In applications like self-driving cars and robotics, the systems are expected to work in dynamically changing and potentially dangerous environments with a large degree of autonomy. 
A natural question to ask before solving a problem in this domain is the means by which the environment, goals, and constraints, if any, are specified. 

Markov decision processes (MDPs) \cite{bertsekas2015dynamic, puterman2014markov} have been used to model environments where outcomes depend on both, an inherent randomness in the model (transition probabilities), and an action taken by an agent. 
These models have been extensively used in applications, including in robotics \cite{lahijanian2012temporal} and unmanned aircrafts \cite{temizer2010collision}. 
Formal methods \cite{baier2008principles} are a means to verify the behavior of complex models against a rich set of specifications \cite{lahijanian2015formal}. 
Linear temporal logic (LTL) is a particularly well-understood framework to express properties like safety, liveness, and priority \cite{kress2007s, ding2014optimal}. 
These properties can then be verified using off-the-shelf model solvers \cite{cimatti1999nusmv, kwiatkowska2011prism}. 

The system might be the target of malicious attacks with the aim of preventing it from reaching a goal. 
An attack can be carried out on the physical system, on the computers that control the working of the system, or on communication channels between components of the system. 
Such attacks % by an intelligent attacker 
have been reported across multiple application domains like power systems \cite{sullivan2017cyber}, automobiles \cite{shoukry2013non}, water networks \cite{slay2007lessons}, and nuclear reactors \cite{farwell2011stuxnet}. 
Therefore, strategies that are designed to only address modeling and sensing errors and uncertainties may not be optimal in the presence of an intelligent adversary who can manipulate the operation of the system. 

Prior work in verifying the satisfaction of an LTL formula over an MDP or a stochastic game assumes that the states are fully observable. 
In many practical scenarios, this may not be the case. 
For example, a robot might only have an estimate of its current location based on the output of a vision sensor \cite{thrun2005probabilistic}. 
This necessitates the use of a framework that accounts for partial observability. 
For the single-agent case, partially-observable Markov decision processes (POMDPs) can be used to try and solve the problem. 
However, partial observability is a serious limitation in determining an `optimal policy' for an agent. 
This demonstrates the need for techniques to determine approximate solutions. 
Heuristics to approximately solve POMDPs include belief replanning, most likely belief state policy, and entropy weighting \cite{cassandra1996acting}, \cite{kaelbling1998planning}, grid-based methods \cite{brafman1997heuristic}, and point-based methods \cite{kurniawati2008sarsop}. 

A large body of work studies classes of problems that are relevant to this paper (see Sec \ref{RelWork}). 
These can be divided into three broad categories: \emph{i)}: synthesis of strategies for systems represented as an MDP that has to additionally satisfy a TL formula; \emph{ii)}: synthesis of strategies for POMDPs; \emph{iii)}: synthesis of defender and adversary strategies for an MDP under a TL constraint. 
While there has been recent work on the synthesis of controllers for POMDPs under TL specifications, these have largely been restricted to the single-agent case, and do not address the case when there might be an adversary with a competing objective. 

In this paper, we study the problem of determining strategies for an agent that has to satisfy an LTL formula in the presence of an adversary in a partially observable environment. 
The defender and adversary take actions simultaneously, and these jointly influence the transitions of the system. 
Our approach is motivated by the treatment in \cite{sharan2014finite} and \cite{sharan2014formal} which propose the synthesis of parameterized finite state controllers (FSCs) for a POMDP that will maximize the probability of satisfaction of an LTL formula. 
This is an approximate strategy since it refrains from using the entire observation and action histories and uses only the most recent observation in order to determine an action. 
Although this restricts the class of policies that are searched over, FSCs are attractive since they can be used to solve the average reward problem over the infinite horizon \cite{sharan2014formal}. 

\subsection{Contributions}

We extend this setting to include an adversary who is also limited in that it does not exactly observe the state. 
The adversary policy is determined by an FSC, whose goal is opposite to that of the defender. 
The goal for the defender will be to synthesize a policy that will maximize satisfaction of an LTL formula for any adversary policy. 
We show that this is equivalent to maximizing, under any adversary policy,  the probability of reaching a recurrent set of a Markov chain that additionally contains states that need to be reached in order to satisfy the LTL formula. 
The search for policies involve optimizing over both the size of the FSC and its parameters (transition probabilities). 
We present a procedure that will allow for the determining of defender and adversary FSCs of fixed sizes that will satisfy the LTL formula with nonzero probability. 
The search for a defender policy that will maximize the probability of satisfaction of the LTL formula for any adversary policy is then reduced to a search among these FSCs of fixed size. 
If these FSCs are parameterized in an appropriate way, it might lend itself to gradient-based optimization techniques. 

\subsection{Outline}

A quick introduction to LTL and partially observable stochastic games (POSGs) is given in Section \ref{Prelim}. 
We set up our problem in Section \ref{Problem}, where we first define FSCs for the two agents, and show how they can be composed with a POSG to yield a Markov chain. 
Section \ref{Results} presents our main results relating LTL satisfaction on a POSG to reaching recurrent sets of a Markov chain, and a procedure to determine candidate FSCs. 
An illustrative example is presented in Section \ref{Example}. 
Section \ref{RelWork} summarizes related work in POMDPs and TL satisfaction on MDPs, and 
Section \ref{Conclusion} concludes the paper, along with a pointer to future directions of research. 

\section{Preliminaries}\label{Prelim}

In this section, we give a concise introduction to linear temporal logic and partially observable stochastic games. 
We then detail the construction of an entity which will ensure that runs on a POSG will satisfy an LTL formula. 

\subsection{Linear Temporal Logic}\label{LTL}

A \emph{linear temporal logic (LTL) formula} \cite{baier2008principles} is defined over a set of atomic propositions $\mathcal{AP}$, and can be inductively written as: 
%\begin{align*}
$\phi:=\mathtt{T}|\sigma| \neg \phi | \phi \wedge \phi | \mathbf{X} \phi |\phi \mathbf{U} \phi$
%\end{align*}

Here, $\sigma \in \mathcal{AP}$, and $\mathbf{X}$ and $\mathbf{U}$ are temporal operators denoting the \emph{next} and \emph{until} operations respectively.  

The semantics of LTL are defined over (infinite) words in $2^{\mathcal{AP}}$, and we write $\eta_0 \eta_1\dots:=\eta \models \phi$ when a trace $\eta \in (2^{\mathcal{AP}})^{\omega}$ satisfies an LTL formula $\phi$. 
Further, let $\eta^i = \eta_i\eta_{i+1}\dots$. 
Then, $\eta \models \mathtt{T}$ if and only if (iff) $\eta_0$ is true; 
$\eta \models \sigma$ iff $\sigma \in \eta_0$; 
$\eta \models \neg \phi$ iff $\eta \not \models \phi$; 
$\eta \models \phi_1 \wedge \phi_2$ iff $\eta \models \phi_1$ and $\eta \models \phi_2$; 
$\eta \models \mathbf{X} \phi$ iff $\eta^1 \models \phi$; 
$\eta \models \phi_1 \mathbf{U} \phi_2$ iff $\exists j \geq 0$ such that $\eta^j \models \phi_2$ and for all $k < j, \eta^k \models \phi_1$. 

Further, the logic admits derived formulas of the form: 
\emph{i)}: $\phi_1 \vee \phi_2:=\neg(\neg \phi_1 \wedge \neg \phi_2)$; 
\emph{ii)}: $\phi_1 \Rightarrow \phi_2:= \neg \phi_1 \vee \phi_2$; 
\emph{iii)}: $\mathbf{F}\phi:=\mathtt{T} \mathbf{U} \phi  \text{  (eventually)}$; 
\emph{iv)}: $\mathbf{G} \phi:= \neg \mathbf{F} \neg \phi \text{  (always)}$. 

%An LTL formula can be represented as a deterministic Rabin Automaton (DRA) such that the DRA accepts a run if and only if the run satisfies the LTL formula.
\begin{df}
A \emph{deterministic Rabin automaton (DRA)} is a quintuple $\mathcal{RA} = (Q, \Sigma, \delta, q_0,F)$ where $Q$ is a nonempty finite set of states, $\Sigma$ is a finite alphabet, $\delta : Q \times \Sigma \rightarrow Q$ is a transition function, $q_0 \in Q$ is the initial state, and $F:=\{(L(i),K(i)\}_{i=1}^M$ is such that $L(i), K(i) \subseteq Q$ for all $i$, and $M$ is a positive integer. 
\end{df}

A \emph{run} of $\mathcal{RA}$ is an infinite sequence of states $q_0q_1\dots$ such that $q_{i} \in \delta (q_{i-1}, \alpha)$ for all $i$ and for some $\alpha \in \Sigma$. 
The run is \emph{accepting} if there exists $(L,K) \in F$ such that the run intersects with $L$ finitely many times, and with $K$ infinitely often. 
An LTL formula $\phi$ over $\mathcal{AP}$ can be represented by a DRA with alphabet $2^{\mathcal{AP}}$  that accepts all and only those runs that satisfy $\phi$. 

\subsection{Partially Observable Stochastic Games}\label{POSG}

\begin{df}\label{SGDefn}
A \emph{stochastic game} \cite{niu2018secure} is a tuple $\mathcal{G}:=(S, U_{def}, U_{adv}, \mathbb{T}, \mathcal{AP}, \mathcal{L})$. 
$S$ is a finite set of states, $s_0 \in S$ is the initial state, $U_{def}$ and $U_{adv}$ are the finite sets of actions of the defender and adversary. 
$\mathbb{T}: S \times U_{def} \times U_{adv}  \times S \rightarrow [0,1]$ encodes $\mathbb{T}(s'|s,u_{def},u_{adv})$, the probability of transition from a state $s$ to a state $s'$ when defender and adversary actions are $u_{def}$ and $u_{adv}$ respectively. 
$\mathcal{AP}$ is a set of atomic propositions, and $\mathcal{L}: S \rightarrow 2^{\mathcal{AP}}$ is a labeling function that maps a state to a subset of atomic propositions that are satisfied in that state. 
\end{df}

A stochastic game can thus be viewed as an extension of Markov decision processes (MDPs) to the case when there is more than one player taking an action. 

When $U_{adv} = \emptyset$ and $|U_{def}| = 1$, $\mathcal{G}$ is a \emph{Markov chain (MC)}. 
For $s, s' \in S$, $s'$ is \emph{accessible} from $s$, written $s \rightarrow s'$, if $\mathbb{T}(s_a|s) \mathbb{T}(s_b|s_a) \dots \mathbb{T}(s_i|s_j) \mathbb{T}(s'|s_i) > 0$ for some (finite subset of) states $s_a,s_b,\dots,s_i,s_j$. 
Equivalently, $s \rightarrow s'$ if there is a positive probability of reaching $s'$ from $s$ in a finite number of steps. 
Two states \emph{communicate}, written $s \leftrightarrow s'$, if $s \rightarrow s'$ and $s' \rightarrow s$. 
\emph{Communicating classes} of states cover the state space of the MC. 
A state is \emph{transient} if there is a nonzero probability of not returning to it when we start from that state, and is \emph{positive recurrent} otherwise. 
If some state in a communicating class is recurrent (transient), then the same holds for all other states in that class. 
Moreover, in a finite state MC, every state is either transient or positive recurrent. 
We refer the reader to \cite{meyn2012markov} for a detailed exposition.  

%The states are fully observable in Definition \ref{SGDefn}. 
Partially observable stochastic games (POSGs) extend Definition \ref{SGDefn} to the case when states may not be observable, and each agent could observe the state according to a different observation function. 
This can be viewed as an interpretation of POMDPs to the case when there is more than one player. 

\begin{df}
A \emph{partially observable stochastic game} \\is $\mathcal{SG} := (S, U_{def}, U_{adv}, \mathbb{T}, \mathcal{O}_{def}, \mathcal{O}_{adv}, O_{def}, O_{adv}, \mathcal{AP}, \mathcal{L})$, where $S, U_{def}, U_{adv}, \mathbb{T}, \mathcal{AP}, \mathcal{L}$ are as in Definition \ref{SGDefn}. $\mathcal{O}_{def}, \mathcal{O}_{adv}$ denote the (finite) sets of observations available to the defender and adversary. $O_*:S \times \mathcal{O}_* \rightarrow [0,1]$ encodes $\mathbb{P}(o_*|s)$, where $* \in \{def, adv\}$. 
\end{df}

The functions $O_*$ can be viewed as a means to model imperfect sensing. 
Then, we have $\sum_{o \in \mathcal{O}_*} O_*(o|s) = 1$. 

The information available until time $t$, denoted $\mathfrak{I}_t$, can be inductively defined as: $\mathfrak{I}_0 = S$, $\mathfrak{I}_t = \mathfrak{I}_{t-1} \times U_{def} \times \mathcal{O}_{def} \times U_{adv} \times \mathcal{O}_{adv}$. 
The overall information is $\mathfrak{I} := \cup_t \mathfrak{I}_t$. 

\begin{df}
A \emph{(defender or adversary) policy} for the POSG is a map from the overall information to a probability distribution over the respective action space, i.e. $\mu_*: \mathfrak{I} \times U_* \rightarrow [0,1]$, where $* \in \{def, adv\}$. 
\end{df}

Policies of the form above are called \emph{randomized policies}. 
If $\mu_*:\mathfrak{I}\rightarrow U_*$, it is called a \emph{deterministic policy}. 

In this paper, defender and adversary policies will be determined by probability distributions over transitions in finite state controllers (Sec \ref{FSCs}) that are composed with the POSG. 
This method is chosen because the FSCs when composed with the product-POSG (Sec \ref{ProdPOSG}), will result in a finite state Markov chain. 

%For the purposes of this work, we will assume that $\mathcal{O}_{def} = S$, that is, the defender has perfect observation of the state (its own state). 
%Therefore, $O_{def}(o|s) = 1$ iff $o=s$. (DO WE NEED THIS HERE NOW? OR SPECIFY LATER?)
%The imperfect sensing of the adversary will serve to model the possibility of the defender using a deceptive policy. 

\subsection{The Product-POSG}\label{ProdPOSG}

In order to find runs on $\mathcal{SG}$ that would be accepted by a DRA $\mathcal{RA}$ built from an LTL formula $\phi$, we construct a product-POSG. 
This construction is motivated by the product-stochastic game construction in \cite{niu2018secure} and the product-POMDP construction in \cite{sharan2014finite}. 

\begin{df}
Given a POSG $\mathcal{SG}$ and a DRA $\mathcal{RA}$ corresponding to an LTL formula $\phi$, a \emph{product-POSG} is a tuple $\mathcal{SG}^{\phi} = (S^{\phi}, U_{def}, U_{adv}, \mathbb{T}^{\phi}, \mathcal{O}_{def}, \mathcal{O}_{adv}, O_{def}^{\phi}, O_{adv}^{\phi}, F^{\phi}, \mathcal{AP}, \mathcal{L}^{\phi})$. 

Here, $S^{\phi} = S \times Q$, $\mathbb{T}^{\phi}((s',q')|(s,q),u_{def}, u_{adv}) = \mathbb{T}(s'|s,u_{def}, u_{adv})$ iff $\delta(q,\mathcal{L}(s')) = q'$, and $0$ otherwise, 
%$O_{def}^{\phi}(o|(s,q)) = O_{def}(o|s)$, $O_{adv}^{\phi}(o|(s,q)) = O_{adv}(o|s)$, 
$O_{*}^{\phi}(o|(s,q)) = O_{*}(o|s)$, 
$F^{\phi} = \{(L^{\phi}(i), K^{\phi}(i))\}_{i=1}^M$ with $L^{\phi}(i), K^{\phi}(i) \subset S^{\phi}$, and $(s,q) \in L^{\phi}(i)$ iff $q \in L(i)$, $(s,q) \in K^{\phi}(i)$ iff $q \in K(i)$, $\mathcal{L}^{\phi}((s,q)) = \mathcal{L}(s)$. 
\end{df}

From the above definition, it is clear that acceptance conditions in the product-POSG depend on the DRA while the transition probabilities of the product-POSG are determined by transition probabilities of the original POSG. 
Therefore, a run on the product-POSG can be used to generate a path on the POSG and a run on the DRA. 
Then, if the run on the DRA is accepting, we say that the product-POSG satisfies the LTL specification $\phi$. 

\section{Problem Setup}\label{Problem}

This section details the construction of finite state controllers (FSCs) for the defender and adversary. 
An FSC for an agent can be interpreted as a policy for that agent. 
When the FSCs are composed with the product-POSG, the resulting entity is a Markov chain. 
We then establish a way to determine satisfaction of an LTL specification on the product-POSG in terms of runs on the composed Markov chain. 
A treatment for the single-agent case when the environment is specified as a POMDP was presented in \cite{sharan2014finite}. 

\begin{figure*}[!t]
% ensure that we have normalsize text
\normalsize
% Store the current equation number.
%\setcounter{mytempeqncnt}{\value{equation}}
% Set the equation number to one less than the one
% desired for the first equation here.
% The value here will have to changed if equations
% are added or removed prior to the place these
% equations are referenced in the main text.
%\setcounter{equation}{5}
\begin{align}
&\mathbb{T}^{\phi, \mathcal{C}_{def}, \mathcal{C}_{adv}}((s',q'),g'_{def},g'_{adv}|(s,q),g_{def},g_{adv}) \label{TransFn} \\ &= \sum_{o \in \mathcal{O}_{def}} \sum_{o' \in \mathcal{O}_{adv}} \sum_{u_{def}} \sum_{u_{adv}} O_{def}(o|s) O_{adv}(o'|s) \mu_{def}(g'_{def}, u_{def}|g_{def}, o) \mu_{adv}(g'_{adv}, u_{adv}|g_{adv}, o') \mathbb{T}^{\phi}((s',q')|(s,q),u_{def}, u_{adv})\nonumber
\end{align}

\begin{align}
O_{def}(o_{def}|s) O_{adv}(o_{adv}|s) \mu_{def}(g'_{def}, u_{def}|g_{def}, o_{def}) \mu_{adv}(g'_{adv}, u_{adv}|g_{adv}, o_{adv}) \mathbb{T}^{\phi}((s',q')|(s,q),u_{def}, u_{adv}) > 0 \label{BigEqn1}
\end{align}

\begin{align}
O_{def}(o_{def}|s) O_{adv}(o_{adv}|s) \mu_{def}(g''_{def}, u_{def}|g_{def}, o_{def}) \mu_{adv}(g''_{adv}, u_{adv}|g_{adv}, o_{adv}) \mathbb{T}^{\phi}((s'',q'')|(s,q),u_{def}, u_{adv}) > 0 \label{BigEqn2}
\end{align}

% Restore the current equation number.
%\setcounter{equation}{\value{mytempeqncnt}}
% IEEE uses as a separator
\hrulefill
% The spacer can be tweaked to stop underfull vboxes.
%\vspace*{2pt}
\end{figure*}

\subsection{Finite State Controllers}\label{FSCs}

Finite state controllers comprise a finite set of internal states. 
The transitions between any two states is governed by the current observation of the agent. 
A directed cyclic graph of internal states of the FSC will allow for remembering events relevant to taking optimal actions \cite{sharan2014finite}. 
In our setting, we will have two FSCs, one for the defender and another for the adversary. 
We will then limit the search for defender and adversary policies to one over FSCs of fixed cardinality. 

\begin{df}
A \emph{finite state controller for the defender (adversary)}, denoted $\mathcal{C}_{def}$ ($\mathcal{C}_{adv}$) is a tuple $\mathcal{C}_* = (G_*, \mu_*)$, where $G_*$ is a finite set of (internal) states of the controller, $\mu_*: G_* \times \mathcal{O}_* \times G_* \times U_* \rightarrow [0,1]$, written $\mu_*(g'_*, u_*|g_*, o_*)$, is a probability distribution of the next internal state and action, given a current internal state and observation. 
The initial state of $\mathcal{C}_*$ is a probability distribution over $G_*$, and will depend on the initial state of the system. 
Here, $* \in \{def, adv\}$. 
\end{df}

%An FSC can be viewed as a finite-state probabilistic automaton that takes the current observation of the agent as its input, and produces a distribution over the actions as its output. 

The setup works as follows: Initial states of the FSCs are determined by the initial state of the POSG. 
At each time step, the defender will observe the state of $\mathcal{SG}^{\phi}$ according to $O_{def}$ and will commit to a policy $\mu_{def}(\cdot)$ generated by $\mathcal{C}_{def}$. 
The adversary observes this and the state according to $O_{adv}$ and responds with $\mu_{adv}(\cdot)$ generated by $\mathcal{C}_{adv}$. 
These actions are taken concurrently, and are applied to $\mathcal{SG}^{\phi}$, which transitions to the next state per the distribution $\mathbb{T}^{\phi}(\cdot)$, and the process is repeated. 

\begin{df}
An FSC is \emph{proper} if there is a positive probability of satisfying a given LTL formula in a finite number of steps under this policy on a system represented by a POMDP. 
\end{df}

This is similar to the definition in \cite{hansen2003synthesis}, with the distinction that the terminal state of an FSC in that context will be directly related to Rabin acceptance pairs of a Markov chain formed by composing $\mathcal{C}_{def}$ and $\mathcal{C}_{adv}$ with a product-POSG (Sec \ref{GMC}). 
We will restrict ourselves to proper FSCs for the rest of this paper. 

\subsection{The Global Markov Chain}\label{GMC}

The FSCs $\mathcal{C}_{def}$ and $\mathcal{C}_{adv}$, when composed with $\mathcal{SG}^{\phi}$, will result in a finite-state, fully observable Markov chain. 
To maintain consistency with the literature, we will refer to this as the \emph{global Markov chain (GMC)} \cite{sharan2014finite}. 

\begin{df}
The \emph{global Markov chain} resulting from a product-POSG $\mathcal{SG}^{\phi}$ controlled by FSCs $\mathcal{C}_{def}$ and $\mathcal{C}_{adv}$ is the tuple $\mathcal{M}:=\mathcal{M}^{\phi, \mathcal{C}_{def}, \mathcal{C}_{adv}} = (S^{\phi, \mathcal{C}_{def}, \mathcal{C}_{adv}}, \mathbb{T}^{\phi, \mathcal{C}_{def}, \mathcal{C}_{adv}}, \mathcal{AP}, \mathcal{L}^{\phi, \mathcal{C}_{def}, \mathcal{C}_{adv}})$, where $S^{\phi, \mathcal{C}_{def}, \mathcal{C}_{adv}} = S^{\phi} \times G_{def} \times G_{adv}$, $\mathcal{L}^{\phi, \mathcal{C}_{def}, \mathcal{C}_{adv}}((s,q),g_{def}, g_{adv}) =\mathcal{L}^{\phi}((s,q))$, and $\mathbb{T}^{\phi, \mathcal{C}_{def}, \mathcal{C}_{adv}}$ is given by Equation (\ref{TransFn}).
\end{df}

Similar to $\mathcal{SG}^{\phi}$, the Rabin acceptance condition for $\mathcal{M}$ is: $F^{\phi, \mathcal{C}_{def}, \mathcal{C}_{adv}} = \{(L^{\phi, \mathcal{C}_{def}, \mathcal{C}_{adv}}(i), K^{\phi, \mathcal{C}_{def}, \mathcal{C}_{adv}}(i))\}_{i=1}^M$, with $(s,q,g_{def},g_{adv}) \in L^{\phi, \mathcal{C}_{def}, \mathcal{C}_{adv}}(i)$ iff $(s,q) \in L^{\phi}(i)$ and $(s,q,g_{def},g_{adv}) \in K^{\phi,\mathcal{C}_{def}, \mathcal{C}_{adv}}(i)$ iff $(s,q) \in K^{\phi}(i)$. 

A state  of $\mathcal{M}$ is of the form $\mathfrak{s} = (s,q,g_{def},g_{adv})$. 
A path on $\mathcal{M}$ is a sequence $\pi:=\mathfrak{s}_0 \mathfrak{s}_1 \dots$ such that $\mathbb{T}(\mathfrak{s}_{k+1}|\mathfrak{s}_k) > 0$, where $\mathbb{T}(\cdot)$ here corresponds to the transition probabilities in $\mathcal{M}$. 
A path on $\mathcal{M}$ is accepting if it satisfies the Rabin acceptance condition. 
This corresponds to an execution in $\mathcal{SG}^{\phi}$ controlled by $\mathcal{C}_{def}$ and $\mathcal{C}_{adv}$. 
A probability space over $\mathcal{M}$ is defined in the usual way \cite{baier2008principles}. 

\subsection{System Model} \label{Model}

Consider a discrete-time finite-state system: $x(t+1)=f(x(t),u_{def}(t),u_{adv}(t), \mathtt{v}(t))$, where $\mathtt{v}(t)$ represents a stochastic disturbance. 
This system can be abstracted as an SG with finite state and action spaces using a simulation-based algorithm, similar to that in \cite{niu2018optimal}. 

\subsection{Problem Statement}\label{Stmt}

The goal is to synthesize a defender policy that will maximize the probability of satisfaction of an LTL specification under any adversary policy. 
Clearly, this will depend on the FSCs, $\mathcal{C}_{def}$ and $\mathcal{C}_{adv}$. 
In this paper, we will assume that the size of the adversary FSC is fixed, and known. 
This can be interpreted as one way for the defender to have knowledge of the capabilities of an adversary, which is a reasonable assumption. 
Future work will consider the problem for FSCs of arbitrary sizes. 
Formally, 

\begin{prob}\label{Prob}
Given a partially observable environment and an LTL formula, determine a defender policy specified by a finite state controller that maximizes the probability of satisfying the LTL formula under any adversary policy that is represented as a finite state controller of fixed size $|G_{adv}| = G_A$. 
%
%Determine defender and adversary finite state controllers that result in a defender policy that maximizes the probability of a POSG satisfying an LTL specification, under any adversary policy. 
That is, 
\begin{align}
\max_{\mathcal{C}_{def}} \min_{\mathcal{C}_{adv}} \mathbb{P}(\mathcal{SG}^{\phi} \models \phi | \mathcal{C}_{def}, \mathcal{C}_{adv}, |G_{adv}| = G_A)
\end{align}
\end{prob}

Optimizing over $\mathcal{C}_{def}$ and $\mathcal{C}_{adv}$ indicates that the solution will depend on $|G_{def}|$, $\mu_{def}(\cdot)$, and $\mu_{adv}(\cdot)$. 

\section{Results}\label{Results}

\subsection{LTL Satisfaction and Recurrent Sets}\label{LTLRecSet}

Our main result relates the probability of the LTL specification being satisfied by the product-POSG, denoted $\mathcal{SG}^{\phi} \models \phi$, in terms of recurrent sets of the GMC. 
Let $\mathcal{R}:=\mathcal{R}^{\phi,\mathcal{C}_{def},\mathcal{C}_{adv}}$ denote the recurrent states of $\mathcal{M}$ under FSCs $\mathcal{C}_{def}$ and $\mathcal{C}_{adv}$. 
Let $\mathcal{R}^S:= (s,q)$ be the restriction of a recurrent state to a state of $\mathcal{SG}^{\phi}$. 

\begin{prop}\label{Proposn}
$\mathbb{P}(\mathcal{SG}^{\phi} \models \phi) > 0$ if and only if there exists $\mathcal{C}_{def}$ such that for any $\mathcal{C}_{adv}$, there exists a Rabin acceptance pair $(L^{\phi}(i),K^{\phi}(i))$ and an initial state of $\mathcal{M}$, $m_0$, the following conditions hold:
\begin{align}
&K^{\phi}(i) \cap \mathcal{R}^S \neq \emptyset \nonumber \\
m_0&\rightarrow (K^{\phi}(i) \times G_{def} \times G_{adv}) \cap \mathcal{R} \label{QualSat}\\
m_0 &\centernot \rightarrow (L^{\phi}(i) \times G_{def} \times G_{adv}) \cap \mathcal{R} \nonumber 
\end{align}
\end{prop}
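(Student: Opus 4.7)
The plan is to translate the Rabin acceptance condition along trajectories of $\mathcal{M}$ into a purely graph-theoretic reachability property, using the standard finite Markov chain decomposition recalled in Section~\ref{POSG}: the state space splits into transient states and positive recurrent communicating classes, with probability one every trajectory is absorbed into a single recurrent class $C$ in finitely many steps, and thereafter visits every state of $C$ infinitely often. Because the labeling $\mathcal{L}^{\phi,\mathcal{C}_{def},\mathcal{C}_{adv}}$ depends only on the $(s,q)$-coordinate, I will repeatedly use the dictionary ``visit $K^{\phi}(i)$ infinitely often $\Leftrightarrow$ the absorbing class $C$ contains a state whose $(s,q)$ lies in $K^{\phi}(i)$,'' and analogously for $L^{\phi}(i)$.

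For the $(\Leftarrow)$ direction I would fix the witnessing $\mathcal{C}_{def}$, an arbitrary $\mathcal{C}_{adv}$, and the pair $(i,m_0)$ from the hypothesis. Condition~(2) gives a positive-probability path from $m_0$ to some $r\in(K^{\phi}(i)\times G_{def}\times G_{adv})\cap\mathcal{R}$; once at $r$ the trajectory remains in the recurrent class $C$ of $r$ forever. I would then argue that condition~(3) forces $C$ to contain no state of $L^{\phi}(i)\times G_{def}\times G_{adv}$, because every state of $C$ is reachable from $m_0$ (via $r$) and recurrent. Consequently, on the positive-probability event of absorption into $C$, $K^{\phi}(i)$ is visited infinitely often while $L^{\phi}(i)$ is visited only during the almost surely finite transient prefix, so the $i$-th Rabin condition holds and $\mathbb{P}(\mathcal{SG}^{\phi}\models\phi)>0$.

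For $(\Rightarrow)$ I would let $\mathcal{C}_{def}$ be a defender FSC witnessing $\mathbb{P}(\mathcal{SG}^{\phi}\models\phi)>0$ and fix an arbitrary $\mathcal{C}_{adv}$. Positive acceptance probability combined with the dictionary above furnishes, among the finitely many Rabin pairs and recurrent classes reachable from the initial distribution of $\mathcal{M}$, an index $i$ and a class $C$ such that $C$ is entered with positive probability, $C$ contains a state with $(s,q)\in K^{\phi}(i)$, and $C$ contains no state with $(s,q)\in L^{\phi}(i)$. I would then take $m_0$ to be any state of $C$: since $C$ is a closed communicating class, the only states reachable from $m_0$ lie in $C$, which yields conditions (1)--(2) (the $K^{\phi}(i)$ witness in $C$ is recurrent and reachable from $m_0$) and condition (3) (no $L^{\phi}(i)\times G_{def}\times G_{adv}$ state is reachable from $m_0$ at all).

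The hardest step will be justifying the strong form of condition~(3). A reachable recurrent $L^{\phi}(i)$-state lying in a \emph{different} recurrent class would appear to block condition~(3) if $m_0$ were forced to be the true initial state of $\mathcal{M}$; I plan to resolve this by exploiting the existential quantifier over $m_0$ and placing $m_0$ inside the good class $C$, where closedness of recurrent classes automatically rules out such reachable $L$-states. A minor side obligation in $(\Leftarrow)$ is to check that positive acceptance probability from $m_0$ propagates to the probability measure induced by the true initial distribution of $\mathcal{M}$, which holds whenever $m_0$ is in its support or reachable from it.
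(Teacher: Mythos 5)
Your proof is correct and rests on the same underlying idea as the paper's: Rabin acceptance on $\mathcal{M}$ is decided by which recurrent class absorbs the run, so positive satisfaction probability is equivalent to reachability of a recurrent class that meets $K^{\phi}(i)\times G_{def}\times G_{adv}$ and avoids $L^{\phi}(i)\times G_{def}\times G_{adv}$. The paper compresses this into a three-case contrapositive sketch plus ``by construction'' for the converse, whereas you build the witnesses directly; in doing so you also repair a real looseness in the paper's case \emph{(iii)}: a reachable recurrent $L^{\phi}(i)$-state in a \emph{different} class from the good one does not by itself preclude satisfaction, and your choice of $m_0$ inside the absorbing class $C$ (together with the closedness of recurrent classes and the check that $m_0$ is reachable from the true initial distribution) is exactly what makes both directions go through. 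This is a faithful, more rigorous rendering of the intended argument rather than a different route.
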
 

\begin{proof}
If for every $(L^{\phi}(i),K^{\phi}(i))$, at least one of the conditions in Equation (\ref{QualSat}) does not hold, then at least one of the following statements is true: 
\emph{i)}: no state that has to be visited infinitely often is recurrent; 
\emph{ii)}: there is no initial state from which a recurrent state that has to be visited infinitely often is accessible; 
\emph{iii)}: some state that has to be visited only finitely often in steady state is recurrent. 
This means $\mathcal{SG}^{\phi} \not \models \phi$ for all $\mathcal{C}_{def}$. 

Conversely, if all the conditions in Equation (\ref{QualSat}) hold for some $(L^{\phi}(i),K^{\phi}(i))$, then $\mathcal{SG}^{\phi} \models \phi$ by construction. 
\end{proof}

To quantify the satisfaction probability for a defender policy under any adversary policy, assume that the recurrent states of $\mathcal{M}$ are partitioned into recurrence classes $\{R_1,\dots,R_p\}$. 
This partition is maximal, in the sense that two recurrent classes cannot be combined to form a larger recurrent class, and all states within a given recurrent class communicate with each other \cite{sharan2014formal}. 

\begin{df}\label{PhiRecSet}
A recurrent set $R_k$ is \emph{$\phi-$feasible} under FSCs $\mathcal{C}_{def}$ and $\mathcal{C}_{adv}$ if there exists $(L^{\phi}(i), K^{\phi}(i))$ such that $K^{\phi}(i) \cap R^S_k \neq \emptyset$ and $L^{\phi}(i) \cap R^S_k = \emptyset$.  
Let $\phi-RecSets^{\mathcal{C}_{def}, \mathcal{C}_{adv}}$ denote the set of $\phi-$feasible recurrent sets under the respective FSCs. 
\end{df}

Over infinite executions, a path of $\mathcal{M}$ will reach a recurrent set. 
Let $\pi \rightarrow R$ denote the event that such a path will reach a recurrent set. 
Then, Theorem \ref{TheoremRecSet} states that 
%determining defender FSCs that maximize the satisfaction probability for any adversary FSC as specified by 
Problem \ref{Prob} is equivalent to determining defender FSCs that maximize the probability of reaching $\phi-$feasible recurrent sets of the GMC under any adversary FSC. 

\begin{thm} \label{TheoremRecSet}
\begin{align}
&\max_{\mathcal{C}_{def}} \min_{\mathcal{C}_{adv}} \mathbb{P}(\mathcal{SG}^{\phi} \models \phi | \mathcal{C}_{def}, \mathcal{C}_{adv})\nonumber \\&=\max_{\mathcal{C}_{def}} \min_{\mathcal{C}_{adv}} \sum_{R \in \phi-RecSets^{\mathcal{C}_{def}, \mathcal{C}_{adv}}}\mathbb{P} (\pi \rightarrow R)
\end{align}
\end{thm}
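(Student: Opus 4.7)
The plan is to establish a pointwise identity between the two quantities as functions of the pair $(\mathcal{C}_{def}, \mathcal{C}_{adv})$, and then take $\max_{\mathcal{C}_{def}} \min_{\mathcal{C}_{adv}}$ of both sides. Since applying the same operator to a pointwise equality between real-valued functions preserves the equality, it suffices to show that for every fixed pair of FSCs,
\[
\mathbb{P}(\mathcal{SG}^{\phi} \models \phi \mid \mathcal{C}_{def}, \mathcal{C}_{adv}) = \sum_{R \in \phi-RecSets^{\mathcal{C}_{def}, \mathcal{C}_{adv}}} \mathbb{P}(\pi \rightarrow R).
\]

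First I would invoke the standard absorption structure of a finite Markov chain: the state space of $\mathcal{M}$ decomposes into transient states together with the disjoint recurrent classes $\{R_1, \dots, R_p\}$, and with probability one every infinite path $\pi$ is eventually absorbed into exactly one $R_k$, after which it almost surely visits every state of $R_k$ infinitely often and no state outside $R_k$. The events $\{\pi \rightarrow R_k\}_{k=1}^{p}$ therefore form an almost-sure partition of the sample space, so that $\mathbb{P}(\mathcal{SG}^{\phi} \models \phi) = \sum_k \mathbb{P}(\mathcal{SG}^{\phi} \models \phi \mid \pi \rightarrow R_k)\,\mathbb{P}(\pi \rightarrow R_k)$.

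Second I would translate the Rabin acceptance condition into a condition on the absorbing class. Projecting onto $\mathcal{SG}^{\phi}$, the set of $\mathcal{SG}^{\phi}$-states visited infinitely often along a path absorbed into $R_k$ is almost surely exactly $R_k^S$. Hence a pair $(L^{\phi}(i), K^{\phi}(i))$ witnesses acceptance of such a path iff $K^{\phi}(i) \cap R_k^S \neq \emptyset$ (so $K^{\phi}(i)$ is hit infinitely often) and $L^{\phi}(i) \cap R_k^S = \emptyset$ (so the only possible visits to $L^{\phi}(i)$ occur on the finite pre-absorption prefix, and are therefore finite in number). This is verbatim the $\phi$-feasibility condition of Definition \ref{PhiRecSet}. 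Consequently, the conditional acceptance probability $\mathbb{P}(\mathcal{SG}^{\phi} \models \phi \mid \pi \rightarrow R_k)$ equals $1$ if $R_k \in \phi-RecSets^{\mathcal{C}_{def}, \mathcal{C}_{adv}}$ and $0$ otherwise, which collapses the sum above to the right-hand side of the pointwise identity.

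The main obstacle I anticipate is the careful handling of the ``finitely often'' clause of Rabin acceptance: I must ensure that visits to $L^{\phi}(i)$ before absorption cannot accumulate into infinitely many visits, which relies on the fact that after some random but almost surely finite absorption time the path remains inside its recurrent class $R_k$. A secondary subtlety is that the recurrent classes themselves depend on both FSCs, but this dependence is already accounted for in the definition of $\phi-RecSets^{\mathcal{C}_{def}, \mathcal{C}_{adv}}$, so no additional bookkeeping is required when the outer $\max\min$ is applied to the pointwise identity.
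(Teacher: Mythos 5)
Your proof is correct and follows essentially the same route as the paper's: establish the pointwise identity $\mathbb{P}(\mathcal{SG}^{\phi} \models \phi \mid \mathcal{C}_{def}, \mathcal{C}_{adv}) = \sum_{R \in \phi-RecSets^{\mathcal{C}_{def}, \mathcal{C}_{adv}}}\mathbb{P}(\pi \rightarrow R)$ for each fixed pair of FSCs via the decomposition of paths by their absorbing recurrent class, then apply $\max_{\mathcal{C}_{def}}\min_{\mathcal{C}_{adv}}$ to both sides. You are in fact more careful than the paper on the converse direction --- explicitly arguing that paths absorbed into a non-$\phi$-feasible class are almost surely rejecting, which the paper's terser argument leaves implicit --- but the underlying structure of the argument is the same.
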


\begin{proof}
Since the recurrence classes are maximal, $\mathbb{P}(\pi \rightarrow (R_1 \cup \dots \cup R_p) ) = \sum_{k=1}^p \mathbb{P}(\pi \rightarrow R_k)$. 
From Definition \ref{PhiRecSet}, a $\phi-$feasible recurrent set will necessarily contain a Rabin acceptance pair. 
Therefore, the probability of $\mathcal{SG}^{\phi}$ satisfying the LTL formula under $\mathcal{C}_{def}$ and $\mathcal{C}_{adv}$ is equivalent to the probability of paths on $\mathcal{M}$ leading to $\phi-$feasible recurrent sets. 
That is, $\mathbb{P}(\mathcal{SG}^{\phi} \models \phi | \mathcal{C}_{def}, \mathcal{C}_{adv}) =  \sum_{R \in \phi-RecSets^{\mathcal{C}_{def}, \mathcal{C}_{adv}}}\mathbb{P} (\pi \rightarrow R)$. 

Then, for some (fixed) $\mathcal{C}_{def}$ (and initial state $\mathfrak{s}$ of $\mathcal{M}$), the minimum probability of satisfying $\phi$ over all adversary FSCs is equal to the minimum probability of reaching a $\phi-$feasible recurrent set. % (\textbf{IS THIS OBVIOUS?})

The result follows for a maximizing $\mathcal{C}_{def}$. 
\end{proof}

Proposition \ref{Proposn} and Theorem \ref{TheoremRecSet} address a broader class of problems than in Problem \ref{Prob} since they do not assume that the size of the adversary FSC is fixed. 

\subsection{Determining Candidate $\mathcal{C}_{def}$ and $\mathcal{C}_{adv}$}\label{FindFSCs}

If the sizes of $\mathcal{C}_{def}$ and $\mathcal{C}_{adv}$ are fixed, then their design is equivalent to determining the transition probabilities between their internal states. 
We are guided by the treatment in \cite{sharan2014formal}. 
However, our framework differs in that we additionally consider the effect of the presence of an adversary while aiming to satisfy an LTL specification. 
%Assume that there exists a $\phi-$feasible recurrent set in the global Markov chain. 

Let the FSC policies $\mu_{def}$ and $\mu_{adv}$ be parameterized by $\Phi_{def}$ and $\Phi_{adv}$ respectively. 
Then, with $* \in \{def,adv\}$, $\mu_*(g'_*,u_*|g_*,o_*):=\mu_*(g'_*,u_*|g_*,o_*, \Phi_*)$. 
Any parameterization of the controller is valid so long as it obeys the laws of probability. 
We use the \emph{softmax parameterization} \cite{sharan2014formal, aberdeen2003policy}, 
%A popular approach is the \emph{softmax} parameterization, 
since it is convex and its derivative can be easily computed. 
Let $\phi_{g'_*,u_*|g_*,o_*} \in \mathbb{R}$ determine the relative probability of making a transition in the FSC along with taking a corresponding action given an observation. 
Then, the transition probabilities of the FSCs are: 
\begin{align}
\mu_*(g'_*,u_*|g_*,o_*,\Phi)=&\frac{e^{\phi_{g'_*,u_*|g_*,o_*}}}{\sum_{g_* \in G_*} \sum_{u_* \in U_*}e^{\phi_{g'_*,u_*|g_*,o_*}}}%\\
%\frac{\partial \mu_*(g'_*,u_*|g_*,o_*,\Phi)}{\partial \phi_{g^1,u|g^0,o}}=&\mu_*(g'_*,u_*|g_*,o_*,\Phi)(\mathbb{1}((g^1,u) = (g'_*,u_*))-\mu_*(g^1,u|g_*,o_*, \Phi)); \text{    if  }g_* = g^0, o_* = o\\
%&0, \text{   otherwise} 
\end{align}

%We refer to \cite{aberdeen2003policy} or \cite{sharan2014formal} for a detailed exposition. 
The parameterization considered in Algorithm \ref{algo1} can be viewed as a special case of the softmax parameterization with $\phi_{g'_*,u_*|g_*,o_*} = 0$ for all $g'_*,g_*,o_*,u_*$. 

Define $\mathcal{I}_{*} : G_* \times \mathcal{O}_* \times G_* \times U_* \rightarrow \{0,1\}$, where $\mathcal{I}_*(g', u | g, o) = 1 \Leftrightarrow \mu_*(g',u|g,o) > 0$. 
$\mathcal{I}_*(\cdot)$ then serves to indicate if it is possible for an observation $o$ in a state $g$ of an FSC to transition to $g'$ in the FSC while issuing action $u$. 
We further assume that $\forall (g,o) \in G_* \times \mathcal{O}_*, \exists (g',u) \in G_* \times U_*$ such that  $\mathcal{I}_*(g', u | g, o) = 1$ \cite{sharan2014formal}. 
Let a state in the GMC be denoted $\mathfrak{s}:= (s,q,g_{def},g_{adv})$. 
%
%\begin{ass}\label{Nonblocking}
%$\mathcal{C}_{def}$ and $\mathcal{C}_{adv}$ are \emph{nonblocking}. 
%That is, for every state in an FSC and every observation ($o_{def}$ or $o_{adv}$, as the case may be), there exists a valid FSC state and a valid transition in the respective FSC (which together, will result in a transition in $\mathcal{SG}^{\phi}$). 
%\end{ass}
%
%This will preclude the adversary from setting all transition probabilities in $\mathcal{C}_{def}$ to zero. 
%
\begin{algorithm}[!t]
\caption{Generate candidate FSCs $\mathcal{C}_{def}, \mathcal{C}_{adv}$}\label{algo1}
\begin{algorithmic}[1]
\REQUIRE $G_{def}$, $G_{adv}$, $\mathcal{SG}^{\phi}$, $\mathcal{I}_{def}^o$, $\mathcal{I}_{adv}^o$
\ENSURE Set of admissible FSC structures $\mathbb{I}:=(\mathbb{I}_{def}, \mathbb{I}_{adv})$, and transition probabilities, $(\mu_{def}(), \mu_{adv}())$ such that GMC has a $\phi-$feasible recurrent set 
\STATE Induce digraph $\mathcal{G}$ of $\mathcal{M}$ of $\mathbb{SG}^{\phi}$ under $\mathcal{I}_{def}^o$ and $\mathcal{I}_{adv}^o$ as $(\mathfrak{S}, \mathcal{E})$, s.t. $\forall \mathfrak{s}_1, \mathfrak{s}_2 \in \mathfrak{S}: \mathfrak{s}_1 \rightarrow \mathfrak{s}_2 \in \mathcal{E} \Leftrightarrow \mathbb{T}(\mathfrak{s}_2|\mathfrak{s}_1) > 0$. 
\STATE $\mathbb{I}_{def} = \mathbb{I}_{adv} = \emptyset$
\STATE $\mathcal{C} = SCCs(\mathcal{G}) = \{C_1, \dots, C_N\}$
\FOR {$C \in \mathcal{C}$ \AND $(L^{\phi}(i),K^{\phi}(i)) \in F^{\phi}$}
\STATE $Bad_i=\{\mathfrak{s}' \notin C: \exists \mathfrak{s} \in C  \text{ s.t. } \mathfrak{s} \rightarrow \mathfrak{s}'\}$
\STATE $Bad_i= Bad_i \cup (C \cap (L^{\phi}(i) \times G_{def} \times G_{adv}))$
\STATE $Good_i = C \cap (K^{\phi}(i) \times G_{def} \times G_{adv})$
\STATE Set $\mathcal{I}_{*}(g'_{*}, u_*|g_*,o_*) = 1$ for all $g'_*, g_*, u_*, o_*$%, $* \in \{def,adv\}$
\WHILE {$\sum_{g'_*, u_*}\mathcal{I}_{*}(g'_{*}, u_*|g_*,o_*)  > 0 \forall o_*, g_*$ \AND $Bad_i \neq \emptyset$}
\STATE Choose $\mathfrak{s'}=(s',q',g_{def}',g_{adv}') \in Bad_i$, \\$\mathfrak{s}''=(s'',q'',g_{def}'',g_{adv}'') \in Good_i$
\FOR {$\mathfrak{s} = (s,q,g_{def},g_{adv}) \in C \setminus Bad_i$}
\FOR {$u_{def} \in U_{def}$}
\STATE $\mu_*(g'_*,u_*|\Phi_*,g_*,o_*) = \frac{\mathcal{I}_{*}(g'_{*}, u_*|g_*,o_*)}{\sum_{g'_*, u_*}\mathcal{I}_{*}(g'_{*}, u_*|g_*,o_*)}$
\IF {$\exists u_{adv} \in U_{adv}$ Eqn (\ref{BigEqn1}) holds }
\STATE $\mathcal{I}_{def}(g'_{def}, u_{def}|g_{def},o_{def}) \leftarrow 0$ \\$ \forall g'_{def}, g_{def} \in G_{def}$
\ENDIF
\ENDFOR
\FOR {$u_{adv} \in U_{adv}$}
\STATE $\mu_*(g''_*,u_*|\Phi_*,g_*,o_*) = \frac{\mathcal{I}_{*}(g''_{*}, u_*|g_*,o_*)}{\sum_{g''_*, u_*}\mathcal{I}_{*}(g''_{*}, u_*|g_*,o_*)}$
\IF {$\forall u_{def} \in U_{def}$, Eqn (\ref{BigEqn2}) holds}
\STATE $\mathcal{I}_{adv}(g''_{adv}, u_{adv}|g_{adv},o_{adv}) \leftarrow 0$%\forall g''_{adv}, g_{adv} \in G_{adv}$
\ENDIF
\ENDFOR
\ENDFOR
\STATE $Bad_i = Bad_i \setminus \{\mathfrak{s}'\}$
\ENDWHILE
\STATE Compute transition probabilities and construct digraph $\mathcal{G}_{new}$ of GMC of $\mathcal{SG}^{\phi}$ under modified $\mathcal{I}_{def}$ and $\mathcal{I}_{adv}$
\STATE $\mathcal{C}_{new} = SCCs(\mathcal{G}_{new})$
\IF {$\exists \mathfrak{s} \in Good_i \text{ s.t. } \mathfrak{s}$ is \emph{recurrent} in $\mathcal{G}_{new}$}
\STATE $\mathbb{I} = (\mathbb{I}_{def} \cup \mathcal{I}_{def}, \mathbb{I}_{adv} \cup \mathcal{I}_{adv})$
\ENDIF
\ENDFOR
\end{algorithmic}
\end{algorithm}
 
In Algorithm \ref{algo1}, for defender and adversary FSCs with fixed number of states, we determine candidate $\mathcal{C}_{def}$ and $\mathcal{C}_{adv}$ such that the resulting $\mathcal{M}$ will have a $\phi-$feasible recurrent set. 
We start with initial candidate structures $\mathcal{I}_*^o$ and induce the digraph of the resulting GMC (\emph{Line 1}). 
This MC might not contain a $\phi-$feasible recurrent set. 
We first determine the set of communicating classes of the MC, which is equivalent to determining the strongly connected components (SCCs) of the induced digraph (\emph{Line 3}). 
A communicating class of the MC will be recurrent if it is a \emph{sink} SCC of the corresponding digraph. 
The states in $Bad_i$ are those in $C$ that are part of the Rabin accepting pair that has to be visited only finitely many times (and therefore, to be visited with very low probability in steady state) (\emph{Line 6}). 
$Bad_i$ further contains states that can be transitioned to from some state in $C$. 
This is because once the system transitions out of $C$, it will not be able to return to it in order to satisfy the Rabin acceptance condition (\emph{Line 5}) (and hence, $C$ will not be recurrent). 
$Good_i$ contains those states in $C$ that need to be visited infinitely often according to the Rabin acceptance condition (\emph{Line 7}). 

%Therefore, it is clear that in order to ensure Rabin acceptance, the states in $K^{\phi}(i) \times G_{def} \times G_{adv}$ need to be recurrent in the global MC. 
Recall that the agents have access to the actual state only via their individual observations. 
%Therefore, if a state of $\mathcal{M}$ has to be made unreachable, we will disallow all corresponding transitions between internal states of the FSC corresponding to that observation. 
A defender action is forbidden if there exists an adversary action that will allow a transition to a state in $Bad_i$ under observations $o_{def}$ and $o_{adv}$. 
This is achieved by setting corresponding entries in $\mathcal{I}_{def}$ to zero (\emph{Lines 12-17}). 
%To make a state in $Bad_i$ unreachable, we forbid defender actions under observation $o_{def}$ that lead to this state under corresponding adversary observation $o_{adv}$ and some adversary action by setting corresponding entries in $\mathcal{I}_{def}$ to $0$. 
An adversary action is not useful if for every defender action, the probability of transitioning to a state in $Good_i$ is nonzero under $o_{def}$ and $o_{adv}$. 
This is achieved by setting the corresponding entry in $\mathcal{I}_{adv}$ to zero (\emph{Lines 18-23}). 

The computational complexity of Algorithm \ref{algo1} depends on: \emph{i)}: determining the SCCs. 
This can be done in $\mathbf{O}(|\mathfrak{S}|+|\mathcal{E}|)$ \cite{tarjan1972depth}. 
We have $|\mathfrak{S}| = |S||G_{def}||G_{adv}|$ and $|\mathcal{E}| \leq |\mathfrak{S}|^2$. 
Therefore, the SCCs can be determined in $\mathbf{O}(|S|^2|G_{def}|^2|G_{adv}|^2)$ in the worst case. 
\emph{ii)}: determining the structures in \emph{Lines 9-26}. 
This, in the worst case, is $\mathbf{O}(|\mathfrak{S}|(|\mathcal{O}_{def}+|\mathcal{O}_{adv}|)(|\mathfrak{S}|(|U_{def}|+|U_{adv}|))$. 
Defining $|\mathcal{O}| =|\mathcal{O}_{def}|+|\mathcal{O}_{adv}|$ and $|U|=|U_{def}|+|U_{adv}|$, we have an overall computational complexity of $\mathbf{O}(|S|^2|G_{def}|^2|G_{adv}|^2|\mathcal{O}||U|)$. 
%Moreover, under the current defender and adversary observations, if for some adversary action, every defender action leads to a state in $Good_i$, then the corresponding entry in $\mathcal{I}_{adv}$ is set to $0$ (\emph{Line 9-25}). 

%If the modified global MC under $\mathcal{C}_{def}$ and $\mathcal{C}_{adv}$ determined by the structure after eliminating the `bad' states has a `good' state that is also recurrent, then this state is $\phi-$feasible, and the resulting structure is a feasible structure. 
\begin{prop}
Algorithm \ref{algo1} is sound. 
That is, each feasible FSC structure $(\mathcal{I}_{def},\mathcal{I}_{adv})$ in $\mathbb{I}$ will have at least one $\phi-$feasible recurrent set. 
\end{prop}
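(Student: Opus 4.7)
The plan is to exhibit, for each structure added to $\mathbb{I}$, an explicit $\phi$-feasible recurrent set of the resulting GMC $\mathcal{M}$ by verifying the two conditions in Definition \ref{PhiRecSet}. A structure enters $\mathbb{I}$ only through the conditional at Lines 28--30, which requires that some state $\mathfrak{s}^\star \in Good_i$ be recurrent in the updated digraph $\mathcal{G}_{new}$; this recurrent state will serve as the witness.

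First I would unpack the witness. Fix $(\mathcal{I}_{def}, \mathcal{I}_{adv}) \in \mathbb{I}$: by inspection of the outer FOR loop, there exist an SCC $C$ and a Rabin pair $(L^\phi(i), K^\phi(i))$ such that some $\mathfrak{s}^\star \in Good_i = C \cap (K^\phi(i) \times G_{def} \times G_{adv})$ is recurrent in $\mathcal{G}_{new}$. Let $R$ be the recurrence class of $\mathfrak{s}^\star$ in $\mathcal{M}$ (equivalently, the sink SCC of $\mathcal{G}_{new}$ containing it). The first condition of Definition \ref{PhiRecSet}, $K^\phi(i) \cap R^S \neq \emptyset$, is then immediate: since $\mathfrak{s}^\star \in R$ projects to a state of $\mathcal{SG}^\phi$ lying in $K^\phi(i)$.

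The hard part is verifying $L^\phi(i) \cap R^S = \emptyset$. I would show the stronger statement $R \cap Bad_i = \emptyset$, which suffices because $Bad_i \supseteq C \cap (L^\phi(i) \times G_{def} \times G_{adv})$ by Line 6 and every state outside $C$ reachable from $C$ is placed in $Bad_i$ by Line 5. The argument proceeds by induction on path length in $\mathcal{G}_{new}$ starting from $\mathfrak{s}^\star$: at the step in the while loop when an $\mathfrak{s}' \in Bad_i$ is selected, Lines 14--16 zero out in $\mathcal{I}_{def}$ every defender action that could, under some adversary response, produce a positive-probability transition from a state in $C \setminus Bad_i$ to $\mathfrak{s}'$, as captured by Equation (\ref{BigEqn1}). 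Since $\mathfrak{s}^\star \in C \setminus Bad_i$ (using the standard disjointness $L^\phi(i) \cap K^\phi(i) = \emptyset$, which gives $Good_i \cap Bad_i = \emptyset$), no outgoing edge in $\mathcal{G}_{new}$ from $\mathfrak{s}^\star$ enters $Bad_i$. Iterating this observation along any path from $\mathfrak{s}^\star$, every successor reached is again in $C \setminus Bad_i$, and hence $R \subseteq C \setminus Bad_i$.

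The main obstacle is the cumulative and monotone bookkeeping of the zeroings across iterations of the while loop. One must check that once an entry of $\mathcal{I}_{def}$ has been set to $0$ for some observation, subsequent updates do not re-enable it and that no alternative positive-probability route into $Bad_i$ is left behind. This follows because the updates only flip $1 \to 0$, the processed element is removed from $Bad_i$ at Line 25, and the loop exits only when either $Bad_i$ is exhausted or the FSC becomes degenerate; in the degenerate case the Line 30 recurrence check would not pass, so no such structure enters $\mathbb{I}$. Combining the two verified conditions, $R$ is $\phi$-feasible per Definition \ref{PhiRecSet}, which completes the soundness argument.
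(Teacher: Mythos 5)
Your proof takes the same route as the paper's: soundness by construction, with the recurrent state in $Good_i$ certified at the final check of Algorithm \ref{algo1} serving as the witness that its recurrence class meets $K^{\phi}(i)$. The paper's own proof stops there in two sentences and never explicitly verifies the second half of Definition \ref{PhiRecSet} (that $L^{\phi}(i)\cap R^S=\emptyset$), whereas you supply that missing piece by arguing $R\subseteq C\setminus Bad_i$ via the monotone zeroing of $\mathcal{I}_{def}$ and the degenerate-FSC exit case, so your write-up is a correct and strictly more complete version of the intended argument.
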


\begin{proof}
This is by construction. 
The output of the algorithm is a set $\{\mathcal{I}_{def}^i,\mathcal{I}_{adv}^i\}_{i=1}^W$ such that the resulting GMC for each case has a state that is recurrent and has to be visited infinitely often. 
This state, by Definition \ref{PhiRecSet}, belongs to $\phi-RecSet^{\mathcal{C}_{def}^i,\mathcal{C}_{adv}^i}$. 
Moreover, if the algorithm returns a nonempty solution, a solution to Problem \ref{Prob} will exist since we assume that the FSCs are proper. 
\end{proof}

Algorithm \ref{algo1} is \emph{suboptimal} since we only consider the most recent observations of the defender and adversary. 
It is also not complete, since there might be a feasible solution that cannot be determined by the algorithm. 
%However, since the algorithm, by construction, returns FSC structures such that the resulting global MC has a $\phi-$feasible recurrent set, the maximum satisfaction probability of LTL satisfaction under any adversary policy exists. 

\begin{rem}
For $\mathcal{C}_{def}$ and $\mathcal{C}_{adv}$ of fixed sizes and structures $\mathcal{I}_{def}$ and $\mathcal{I}_{adv}$, a solution to Problem \ref{Prob} is:
\begin{align}
\max_{\Phi_{def}} \min_{\Phi_{adv}} \mathbb{P}(\mathcal{SG}^{\phi} \models \phi | \mathcal{C}_{def}, \mathcal{C}_{adv})
\end{align}
\end{rem}

This follows from the fact that for fixed FSC sizes and structures, the properties of a set (recurrent or transient) in the GMC will not change. 
What remains then is to choose the transition probabilities appropriately. 
For a softmax parameterization, this computation is presented in \cite{sharan2014formal}, and we omit it for want of space. 

\subsection{Determining Recurrent States to Visit}\label{DetRecState}

Algorithm \ref{algo2} returns a subset of the recurrent states that are consistent with the Rabin acceptance pairs that need to be visited `often' in steady state. 
If there is a reward structure over the states of the GMC that incentivizes visits to $Good_k$, then the expected long-term average reward is equal to the \emph{expected occupation measure} of $Good_k$ \cite{sharan2014finite}. 
Moreover, in the infinite horizon, we can assume that the system has been absorbed in a recurrent set, and the resulting (sub-)Markov chain is irreducible. 
%This problem can be viewed as minimizing an \emph{average cost per stage} problem \cite{bertsekas2015dynamic, niu2018optimal}. 
Then, this problem can be solved by viewing it as minimizing an \emph{average cost per stage} problem \cite{bertsekas2015dynamic}. 
\begin{algorithm}[!t]
\caption{Recurrent states to visit in steady state}\label{algo2}
\begin{algorithmic}[1]
\REQUIRE $R_k \in \phi-RecSets^{FSC_{def},FSC_{adv}}, \{L^{\phi}(i),K^{\phi}(i)\}_{i=1}^M$
\ENSURE $Good_k \subseteq R_k$, the set of states that need to be visited `often' in steady-state
\STATE $Good_k = \emptyset$
\FOR {$i=1$ to $M$}
\IF {$((L^{\phi}(i) \times G_{def} \times G_{adv}) \cap R_k = \emptyset)$}
\STATE $Good_k = Good_k \cup (((K^{\phi}(i)\times G_{def} \times G_{adv}) \cap R_k)$
\ENDIF
\ENDFOR
\end{algorithmic}
\end{algorithm}

\section{Example}\label{Example}

%This section presents an illustrative example to demonstrate the operation of Algorithm \ref{algo1}. 
%
Assume the state space is given by $S:=\{s_i:i = x+My, x \in \{0,\dots,M-1\}, y \in \{0, \dots, N-1\}\}$. 
This will define an $M \times N$ grid. 
The defender's actions are $U_{def} = \{R, L, U, D\}$ and the adversary's actions are $U_{adv} = \{A, NA\}$, denoting right, left, up, down, attack, and not attack. 
The observations of both agents are $\mathcal{O}_{def} = \mathcal{O}_{adv} = \{correct, wrong\}$, with $O_{def}(correct|s_i) = 0.8 = 1-O_{def}(wrong|s_i)$, and $O_{adv}(correct|s_i) = 0.6 = 1-O_{adv}(wrong|s_i)$. 
Let $\mathcal{AP} = \{unsafe, goal\}$. 
Then, if $\phi = \mathbf{G}\mathbf{F} goal \wedge \mathbf{G} \neg unsafe$, it can be shown that the corresponding DRA will have two states $q_0, q_1$, with $F = (\{\emptyset\}, \{q_1\})$. 
The transition probabilities for $(u_{def}, u_{adv}) = (R, NA)$ and $(R, A)$ are defined below. 
The probabilities for other action pairs can be defined similarly. 
Let $N_{s_i}$ denote the neighbors of $s_i$. 
\begin{align*}
\mathbb{T}(s_j|s_i, R, NA)=
\begin{cases}
0.8 & \text{$j=i+1$, $i+1 \% M \neq 0$}\\ 
\frac{0.2}{|N_{s_i}|} & \text{($s_j \in  \{s_i\} \cup N_{s_i}\setminus \{s_{i+1}\}$), $i+1 \% M \neq 0$}\\
1 & \text{$j =i$ and $i+1 \%M = 0$}\\
\end{cases}
\end{align*}
\begin{align*}
\mathbb{T}(s_j|s_i, R, A)=
\begin{cases}
0.6 & \text{$j=i+1$, $i+1 \% M \neq 0$}\\ 
\frac{0.4}{|N_{s_i}|} &\text{($s_j \in  \{s_i\} \cup N_{s_i}\setminus \{s_{i+1}\}$), $i+1 \% M \neq 0$}\\
1 & \text{$j =i$ and $i+1 \%M = 0$}
\end{cases}
\end{align*}

For this example, let $M = 3, N = 2$. 
Then, $|S| = 6$. 
Let $s_4$ be an unsafe state, and $s_5$ be the goal state. 
This is indicated in Figure \ref{GMCInit}. 
Let $|G_{def}| = 2, |G_{adv}|=1$ for the FSCs. 
Assume that for some initial structures $\mathcal{I}_{def}^0, \mathcal{I}_{adv}^0$ the GMC is given by Figure \ref{GMCInit}. 
The figure also indicates the states in terms of its individual components. 
Assume that the LTL formula $\phi$ is such that the states in green denote those that have to be visited infinitely often in steady state, while those in red must be avoided. 
Therefore $(L^{\phi}, K^{\phi}) = \{(\{\emptyset\}, \{m_1\}), (\{m_3\},\{m_2\})\}$. 
The boxes $C_1, C_2, C_3$ indicate the communicating classes of the graph. 
\begin{figure}[!t]
 \centering
  \includegraphics[width=2.45 in]{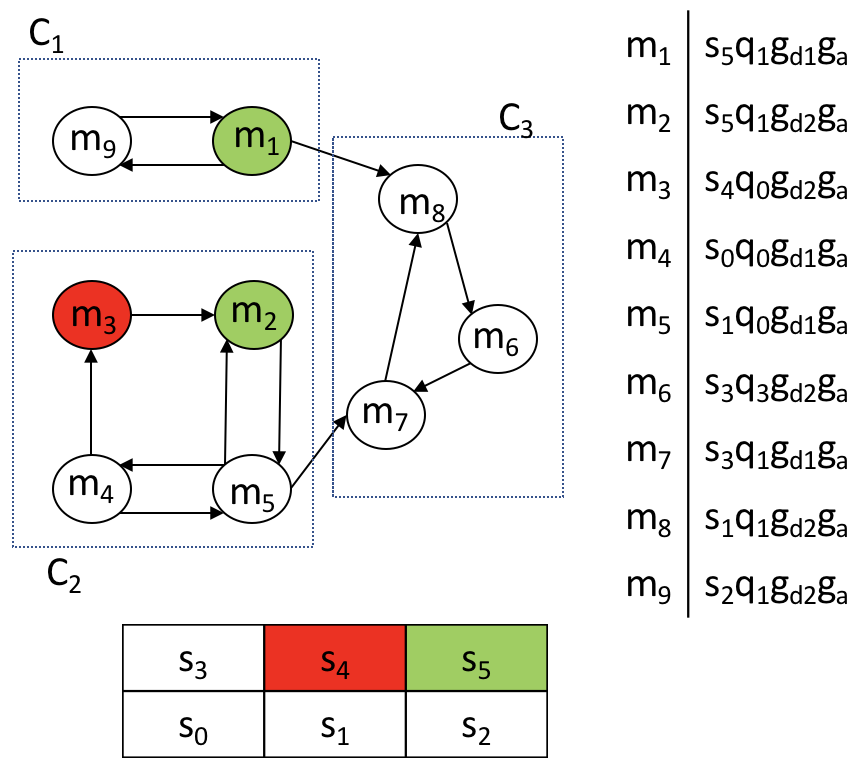} 
\caption{Clockwise, from \emph{top-left}: Global Markov chain (GMC) for initial defender and adversary FSC structures- green states ($m_1 \& m_2$) must be visited infinitely often, and state in red ($m_3$) must be visited finitely often in steady-state; GMC state $m_i \in S \times Q \times G_{def} \times G_{adv}$; State-space for $M=3, N=2$ showing unsafe ($s_4$) and target ($s_5$) states.}\label{GMCInit}
\end{figure}

From Algorithm \ref{algo1}, for $C_1$, $Bad = \{m_8\}, Good = \{m_1\}$. 
For $m_1 \rightarrow m_8$, notice that Equation (\ref{BigEqn1}) is true for all $u_{adv}$ and $u_{def} = \{D, L\}$. 
Therefore, $\mathcal{I}_{def}(g',u_{def}|g,o) \leftarrow 0$ for $o = \{correct, wrong\}$. 
For $m_9 \rightarrow m_1$, since Equation (\ref{BigEqn2}) fails to hold for $R, D \in U_{def}$, $\mathcal{I}_{adv}(\cdot)$ remains unchanged. 
Then, $m_1$ is recurrent in $\mathcal{G}_{new}$. 
For $C_2$, $Bad = \{m_3, m_7\}, Good = \{m_2\}$. 
Like for $C_1$, $\mathcal{I}_{adv}(\cdot)$ remains unchanged, since Equation (\ref{BigEqn2}) does not hold for $D \in U_{def}$. 
Corresponding to $m_5 \rightarrow m_7$, $\mathcal{I}_{def}(g',u_{def}|g,o) \leftarrow 0 \forall u_{def} \in U_{def}\setminus D$. 
A similar conclusion can be reached for $m_4 \rightarrow m_3$. 
Then, $m_2$ will be recurrent in $\mathcal{G}_{new}$. 
For $C_3$, since $Bad = Good = \emptyset$, no structure is added to $\mathbb{I}$. 
Notice that these FSCs satisfy Proposition \ref{Proposn}. 

This example also demonstrates the limitations of Algorithm \ref{algo1}. 
From the $M \times N$ grid, it is clear there will exist a policy that takes the defender from any $s \in S \setminus \{s_4\}$ to $s_5$ with probability $1$. However, for FSCs of small size, the initial state of the defender might result in the Algorithm reporting that no solution was found, even if there exists a feasible solution. 

\section{Related Work}\label{RelWork}

Satisfying TL constraints during motion planning for robots is an active area of research. 
Approaches include hierarchical control \cite{fainekos2009temporal}, ensuring probabilistic satisfaction guarantees \cite{lahijanian2012temporal}, and sensing-based strategies \cite{kress2007s}. 

%Controller synthesis for deterministic linear systems to ensure that the closed-loop system will satisfy an LTL formula is studied in \cite{kloetzer2008fully}. 
The authors of \cite{wolff2012robust} propose methods to synthesize a robust control policy that satisfies an LTL formula for a system represented as an MDP whose transitions are not exactly known, but are assumed to lie in a set. 
For MDPs under an LTL specification, a partial ordering on the states is leveraged to solve controller synthesis as a receding horizon problem in \cite{wongpiromsarn2012receding}. 
The synthesis of an optimal control policy that maximizes the probability of an MDP satisfying an LTL formula that additionally minimizes the cost between satisfying instances is studied in \cite{ding2014optimal}. 
This is computed by determining \emph{maximal end components} in an MDP. 
However, this approach will not work in the partially observable setting, where policies will depend on an observation of the state \cite{sadigh2014learning}. 
The synthesis of joint control and sensing strategies for discrete systems with incomplete information and sensing is presented in \cite{fu2016synthesis}. 
The setting of \cite{ding2014optimal} in the additional presence of an adversary with competing objectives has been presented in \cite{niu2018optimal}. 

A policy in a POSG (or POMDP), without loss of generality, depends on the `history' of the system. 
That is, a policy at time $t$ depends on actions and observations at all previous times. 
A memoryless policy on the other hand, only depends on the current state. 
For fully observable stochastic games, it is possible to always find memoryless policies that are optimal. 
However, a policy with memory could perform much better than a memoryless policy for POSGs. 
One way of determining policies for a POSG is to keep track of the entire execution, observation, and action histories, which can be abstracted into determining a sufficient statistic for the POSG execution. 
%An example of a sufficient statistic that has widely been studied in the POMDP literature is the \emph{belief state}, an abstraction that reflects the probability that the agent is in a state based on receiving observations from the environment. 
One example is the \emph{belief state}, which reflects the probability that the agent is in some state, based on receiving observations from the environment. 
Updating the \emph{belief state} at every time step only requires knowledge of the previous belief state and the most recent action and observation. 
Thus, the belief states form the states of an MDP \cite{smallwood1973optimal}, which is more amenable to analysis \cite{bertsekas2015dynamic} than a POMDP.   
However, the belief state is uncountable, and thus will not allow for the development of exact algorithms to determine strategies since these will require nontrivial and potentially infinite memory. 

Synthesis of memoryless strategies for POMDPs in order to satisfy a specification was shown to be NP-hard and in PSPACE in \cite{vlassis2012computational}. 
In \cite{wongpiromsarn2012control}, a discretization of the belief space is carried out \emph{apriori}, resulting in a fully observable MDP. 
However, this approach might not be practical if the state space is large \cite{yu2008near}. 
The complexity of determining a winning strategy to solve the problem of determining the probability of satisfaction of %a broad class of objectives called 
parity objectives was shown to be undecidable in \cite{chatterjee2013survey}. 
However, determining finite-memory strategies for the qualitative problem of parity objective satisfaction was shown to be EXPTIME-complete in \cite{chatterjee2014complexity}. 

Dynamic programming (DP) for POSGs has been studied in \cite{hansen2004dynamic}, resulting in an algorithm that generalizes both DP for POMDPs and iterated elimination of dominated strategies for normal form games.
This work, however, considered the finite horizon case, and all agents had to maximize their own expected rewards. 
When agents cooperate to earn rewards, the framework is called a decentralized-POMDP (Dec-POMDP). 
The infinite horizon case for Dec-POMDPs was studied in \cite{bernstein2005bounded}, where the authors proposed a bounded policy iteration algorithm for policies represented as joint FSCs. 
A complete and optimal algorithm for deterministic FSC policies for DecPOMDPs was presented in \cite{szer2005optimal}. 
Optimization techniques for `fixed-size controllers' to solve Dec-POMDPs were investigated in \cite{amato2010optimizing}. 
A survey of recent research in Dec- POMDPs is presented in \cite{oliehoek2016concise}. 

\section{Conclusion}\label{Conclusion}

This paper presented, to the best of our knowledge, the first approach that uses finite state controllers to satisfy an LTL formula in a partially observable environment in the presence of an adversary. 
We showed that the probability of satisfaction of the LTL formula in this setting was equal to the probability of reaching recurrent classes of a Markov chain. 
Further, we presented a procedure to determine defender and adversary controllers of fixed sizes that result in a nonzero satisfaction probability of the LTL formula, and proved its soundness.

%In ongoing and future work, we plan to study methods to determine an optimal defender FSC in the presence of an adversary. 
%This will involve solving a parametric optimization problem on the respective FSCs. 
%We note that this has been done for the single agent case in \cite{sharan2014formal}, but extending it to an environment with an intelligent adversary will be challenging and interesting. 
%
In ongoing and future work, we plan to investigate the case when the size of the defender FSC can be changed to improve the probability of satisfaction of the LTL formula. 
This has been done for the single agent case in \cite{sharan2014formal}, but extending it to an environment with an intelligent adversary will be challenging and interesting. 
We also plan to study applications of this framework. 

\bibliographystyle{IEEEtran}
\bibliography{PartObsCont.bib}
\end{document}